\documentclass[10pt]{article}

\usepackage[margin=2.5cm, includefoot, footskip=30pt]{geometry}

\usepackage{amssymb}
\usepackage{amsmath}
\usepackage{amsthm}
\usepackage{booktabs}
\usepackage{caption}
\usepackage{epsfig}
\usepackage{graphicx}
\usepackage{hyperref}
\usepackage{psfrag}
\usepackage{subcaption}
\usepackage{ulem}
\usepackage{standalone}
\usepackage{tikz}
\usetikzlibrary{calc}
\usepackage{breakcites}
\usepackage{setspace}
\usepackage{lipsum}
\usepackage{breqn}

\captionsetup{font=doublespacing}% Double-spaced float captions

\doublespacing% Double-spaced document text

\newcommand{\be}{\begin{eqnarray}}
\newcommand{\ee}{\end{eqnarray}}
\usepackage{tabularx} % http://ctan.org/pkg/tabularx
 % tables have longer cells

\newtheorem{theorem}{Theorem}

\usepackage{lineno}

\title{An Evolutionary Game Theoretic Model of Rhino Horn Devaluation}
\author{Nikoleta E. Glynatsi, Vincent Knight, Tamsin E. Lee}
\date{}

\begin{document}

\maketitle

\bigskip
\textbf{Keywords}: Evolutionary dynamics; Evolutionary stability; Game theory; \\
\indent Poachers' interactions; Rhinoceros; Wildlife

\bigskip

\textbf{Declaration of interest}: None.

\newpage
\begin{abstract}

Rhino populations are at a critical level due to the demand for rhino horn and
the subsequent poaching. Wildlife managers attempt to secure rhinos with
approaches to devalue the horn, the most common of which is dehorning. Game theory
has been used to examine the interaction of poachers and wildlife managers where
a manager can either `dehorn' their rhinos or leave the horn attached and poachers
may behave `selectively' or `indiscriminately'. The approach described in this paper
builds on this previous
work and investigates the interactions between the poachers. We build an evolutionary
game theoretic model and determine which strategy is preferred by a poacher in various
different populations of poachers. The purpose of this work is to discover whether
conditions which encourage the poachers to behave selectively exist, that is,
they only kill those rhinos with full horns.

The analytical
results show that full devaluation of all rhinos will likely lead to
indiscriminate poaching. In turn it shows that devaluing of rhinos can only be
effective when implemented along with a strong disincentive framework.
This paper aims to
contribute to the necessary research required for informed discussion about the
lively debate on legalising rhino horn trade.

\end{abstract}

%=============================================
\section{Introduction}\label{section:introduction}

Rhino populations now persist largely in protected areas or on private land, and
require intensive protection~\cite{Ferreira2014} because the demand for rhino
horn continues to pose a serious threat~\cite{Amin2006}. The illegal trade in rhino
horn supports aggressive poaching syndicates and a black market~\cite{Nowell1992, Warchol2003}.
This lucrative market entices people to invest their time and energy to gain a
`windfall' in the form of a rhino horn, through the poaching of rhinos.

Standard economic theory predicts extinction through poaching alone is unlikely due to
escalating costs as the number of remaining species approaches zero~\cite{courchamp2006rarity}.
However, the rarity of rhino horn makes it a luxury good, or financial investment
for the wealthy~\cite{gao2016rhino}, and thus the increased cost and risk to poach
does not increase as rapidly as the increased gain - the anthropogenic
Allee effect~\cite{branch2013opportunistic, courchamp2006rarity}. However, the
anthropogenic Allee effect was recently revisited~\cite{holden2017high} to highlight
that the relationship is even more complex and pessimistic. The value of rhino horn
can inflate, even with a large population size, due to an increase in the cost (i.e. risk)
to poach. Therefore measures to protect rhino horn may actually be increasing the gain
to poachers. It is not clear whether this relationship has contributed to the escalation
in rhino poaching over recent years. Nonetheless, it is clear that the future
existence of rhinos is endangered because of poaching~\cite{Duan2013,Smith1993}.
This rationale leads to debate about legalising rhino horn trade, which in turn may
reduce demand. In~\cite{Duan2013} the authors suggest meeting the demand for rhino horn
through a legal market by farming the rhino horn from live rhinos. In fact recently
the actual quantity of horn that could be farmed was estimated by \cite{Taylor2017}.
However~\cite{crookes2016trading} argues that because the demand for horn is so high,
legalising trade may lead to practices that maximise profit, but are not suitable for
sustainable rhino populations, and thus rhinos may be `traded on extinction'.
Preventing poaching covers in-country and global issues, and thus legalising rhino
horn trade is a controversial and active conversation, which is not limited to rhinos -
\cite{Harvey2017} considered ivory and stated that by enforcing a domestic ivory
trade ban we can reduce the market's demand.

As it stands, for wildlife managers law enforcement is often one of the main methods
to deter poachers. Rhino conservation has seen increased militarisation with `boots
on the ground' and `eyes in the sky'~\cite{duffy_st}. An alternative method is to
devalue the horn itself, one of the main  methods being the removal so that only
a stub is left. The potential impact of various policies are nicely summarised
in~\cite{crookes2016categorisation}, where de-horning is noted to be promising
for `in-country intervention'.
The first attempt at large-scale rhino dehorning as an anti-poaching
measure was in Damaraland, Namibia, in 1989~\cite{Milner1992}. Other methods
of devaluing the horn that have been suggested include
the insertion of poisons, dyes or GPS trackers~\cite{Gill2010, Smith1993}. However,
like dehorning, they cannot remove all the potential gain from an intact horn
(poison and dyes fade or GPS trackers can be removed and have been found to affect
only a small proportion of the horn).
In~\cite{Milner1992, milner1999many} they found the
optimum proportion to dehorn using mean horn length as a measure of the
proportion of rhinos dehorned. They showed, with realistic parameter values,
that the optimal strategy is to dehorn as many rhinos as possible.
A manager does not need to choose between law enforcement or devaluing, but
perhaps adopt a combination of the two; especially given that devaluing rhinos
comes at a cost to the manager, and the process comes with a risk to the rhinos.

A recent paper modelled the interaction between a rhino manager and poachers
using game theory~\cite{Lee}. The authors consider a working year of a single
rhino manager. A manager is assumed to have standard yearly resources which
can be allocated on devaluing a proportion of their rhinos or spent on security.
It is assumed that all rhinos initially have intact horns. Poachers may either only
kill rhinos with full horns, `selective poachers', or kill all rhinos they encounter,
`indiscriminate poachers'. This strategy may be preferred to avoid tracking a
devalued rhino again, and/or to gain the value from the partial horn. If all rhinos
are left by the rhino manager with their intact horns, it does not pay poachers to
be selective so they will chose to be indiscriminate since being selective incurs
an additional cost to discern the status of the rhino. Conversely, if all poachers are
selective, it pays rhino managers to invest in devaluing their rhinos.
This dynamic is represented in Fig.~\ref{fig:RhinoPic}.
Assuming poachers and managers will always behave so as to maximise their payoff,
there are two equilibriums: either all rhinos are devalued and all poachers are selective;
or all horns are intact and all poachers are indiscriminate. Essentially, either the managers
win, the top left quadrant of Fig.~\ref{fig:RhinoPic}, or the poachers win, the bottom
right quadrant of Fig.~\ref{fig:RhinoPic}. The paper~\cite{Lee} concludes that poachers will
always choose to behave indiscriminately, and thus the game settles to the top
left quadrant, i.e., the poachers win.

\begin{figure}[!htbp]
    \centering
    \includegraphics[scale=0.2]{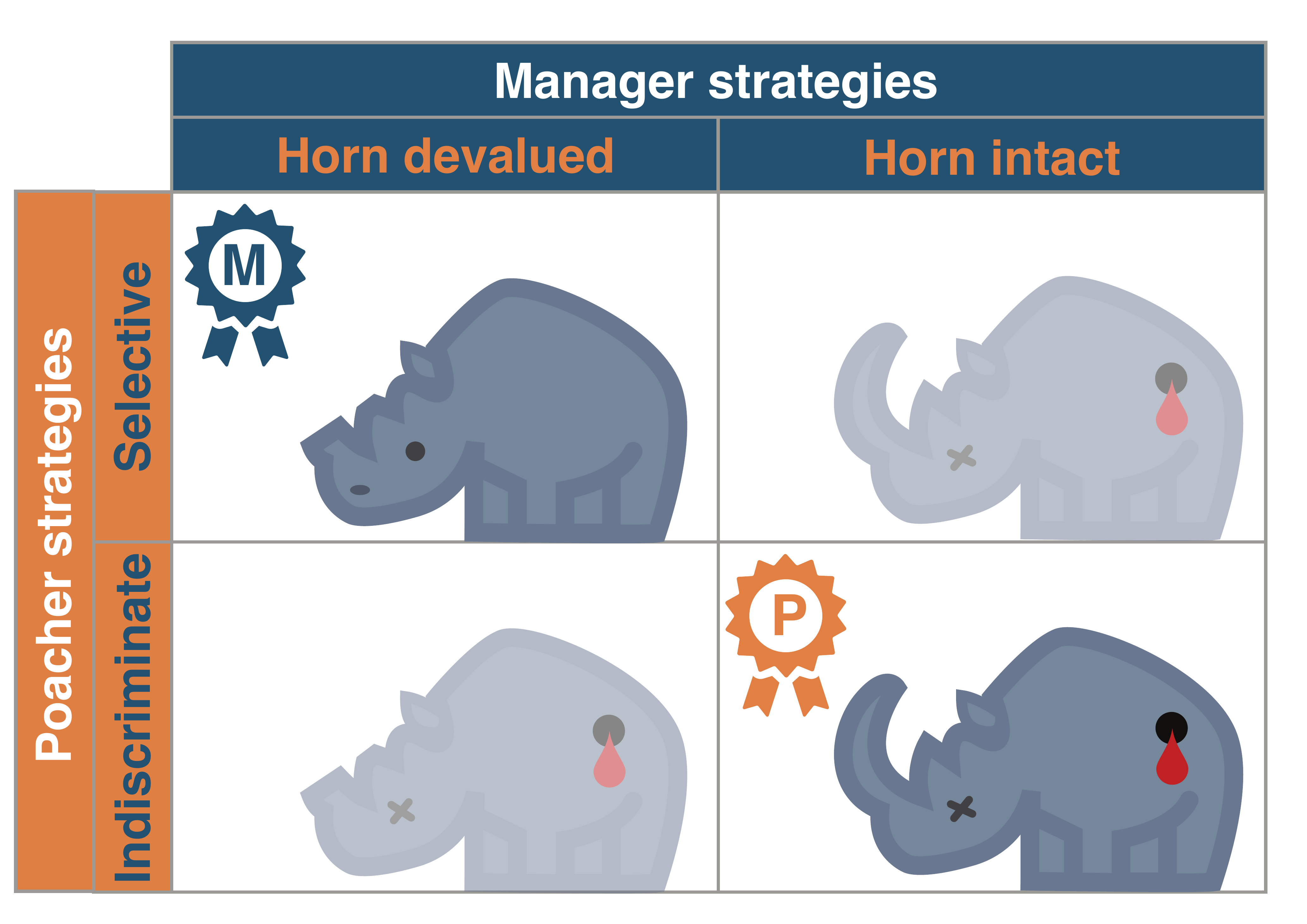}
    \caption{\label{fig:RhinoPic} The game between rhino manager and rhino
    poachers. The system settles to one of two equilibriums, either devaluing is effective or not.}
\end{figure}

At the extremes, we could consider the game as one of opportunistic exploitation~\cite{branch2013opportunistic}.
That is, consider intact rhinos and devalued rhinos as two species, where one is more valuable than
the other. Opportunistic exploitation advances upon the theory of anthropogenic Allee
effect to consider two species which are exploited together. Specifically, when a
highly valued species becomes rarer, a secondary, less valuable species is then targeted.
As with opportunistic exploitation on a larger scale, rhino managers need to account for
the multispecies system.

In this manuscript, we explore the population dynamic effects associated with
the interactions described by~\cite{Lee}. More specifically, the interaction between
poachers. In a population full of indiscriminate poachers is there
a benefit to a single poacher becoming selective or vice versa? This notion
is explored here using evolutionary game theory~\cite{Smith}. The
game is not that of two players anymore (manager and poacher) but now the players
are an infinite population of poachers. This allows for the interaction between
poachers over multiple plays of the game to be explored with the rhino manager
being the one that creates the conditions of the population.

Note that poachers are, in practice finite, and each has individual factors that will
affect a poacher's behaviour. An infinite population model corresponds to either
an asymptotic generalisation or overall descriptive behaviour.

In evolutionary game theory, we assume infinite populations and in our
model this is represented by \((x_1, x_2)\) with \(x_1\) being the proportion
of the population using a strategy of the first type and \(x_2\) of the second. We
assume there are utility functions \(u_1\) and \(u_2\) that map the population
to a fitness for each strategy, given by,

\[ u_1(x_1, x_2)  \text{ and } u_2(x_1, x_2).\]

In evolutionary game theory these utilities are used to dictate the evolution of
the population over time, according to the following replicator equations,

\begin{eqnarray}
    \label{eqn:u_differential_eq}
    \left\{
    \begin{array}{cl}
    \dfrac{dx_1}{dt}=x_1(u_1(x_1, x_2)-\phi),
    \\
    \\
    \dfrac{dx_2}{dt}= x_2(u_2(x_1, x_2)-\phi),
    \end{array} \right.
\end{eqnarray}
where \(\phi\) is the average fitness of the whole population~\cite{nowak2006evolutionary}.
In some settings these utilities are referred to as fitness and/or are mapped to
a further measure of fitness. This is not the case considered here (it is
assumed all evolutionary dynamics are considered by the utility measures).

Here, the overall
population is assumed to remain stable thus, \(x_1 + x_2 = 1 \) and
\begin{eqnarray}
    \dfrac{dx_1}{dt}  + \dfrac{dx_2}{dt} = 0 \Rightarrow x_1(u_1(x_1, x_2) - \phi)
     + x_2(u_2(x_1, x_2) - \phi)=0.
\end{eqnarray}

Recalling that \(x_1 + x_2 = 1\) the average fitness can be written as,

\begin{eqnarray}
\label{eqn:average_fitness}
    \phi=x_1u_1(x_1, x_2) + x_2u_2(x_1, x_2).
\end{eqnarray}

\noindent{} By substituting (\ref{eqn:average_fitness}) and \(x_2= 1 - x_1\)
in (\ref{eqn:u_differential_eq}),

\begin{eqnarray}
    \label{eqn:u_differential_simplified}
    \frac{dx_1}{dt}= x_1(1 - x_1)(u_1(x_1, x_2) - u_2(x_1, x_2)).
\end{eqnarray}

The equilibria of the differential equation (\ref{eqn:u_differential_simplified})
are given by, \(x_1=0\), \(x_1=1\), and \(0<x_1<1\) for \(u_1(x_1, x_2)=u_2(x_1, x_2)\).
These equilibria correspond to stability of the population: the differential
equation (\ref{eqn:u_differential_simplified}) no longer changes.

The notion of evolutionary stability can be checked only for these stable strategies.
For a stable strategy to be an Evolutionary Stable Strategy (ESS) it must remain
the best response even in a mutated population \((x_1, x_2)_\epsilon\). A mutated population
is the post entry population
where a small proportion \(\epsilon > 0\) starts deviating and adopts a different strategy.

In Section~\ref{section:the_model}, we determine expressions
for \(u_1, u_2\) that correspond to a population of wild rhino poachers and we
explore the stability of the equilibria identified in~\cite{Lee}. The results
contained in this paper are proven analytically, and more specifically it is
shown that:

\begin{itemize}
    \item In the presence of sufficient risk: a population of selective poachers
        is stable, meaning dehorning is a viable option.
    \item Full devaluation of all rhinos will lead to indiscriminate poachers.
\end{itemize}

%============================================================
\section{The Utility Model}\label{section:the_model}

As discussed briefly in Section~\ref{section:introduction}, a rhino poacher
can adopt two strategies, to either behave selectively
or indiscriminately. To calculate the utility for each strategy, the gain and cost
that poachers are exposed to must be taken into account. The poacher incurs a
loss from seeking a rhino, and the risk involved. The gain depends upon the value
of horn, the proportion of horn remaining after the manager has devalued the
rhino horn and the number of rhinos (devalued and not).

Let us first consider the gain to the poacher, where \(\theta\) is the amount of
horn taken. We assume rhino horn value is determined by weight only, a
reasonable assumption as rhino horn is sold in a grounded form~\cite{Saverhino}.
Clearly if the horn is intact, the amount of horn gained is \(\theta=1\) for
both the selective and the indiscriminate poacher.  If the rhino horn has been
devalued, and the poacher is selective, the amount of horn gained is
\(\theta=0\) as the poacher does not kill. However, if the poacher is behaving
indiscriminately, the proportion of value gained from the horn is \(\theta =
\theta_r\) (for some \(0<\theta_r<1\)). Therefore, the amount of horn gained in
the general case is

\begin{eqnarray}
    \label{eqn:theta}
    \theta(r, x) = x (1 - r) + (1 - x)(1-r+r\theta_r)
\end{eqnarray}

where \(r\) is the proportion of rhinos that have been devalued, and \(x\) is the
proportion of selective poachers and \(1-x\) is the proportion of indiscriminate
poachers. Note that since \(\theta_r, r, x  \in [0, 1]\), then
\(\theta(r, x) > 0\), that is, some horn will be taken. Standard supply and demand
arguments imply that the value
of rhino horn decreases as the quantity of horn available increases~\cite{mankiw2010}.
Thus at any given time the expected gain is

\begin{eqnarray}
    \label{eqn:individual_gain}
    H \theta(r, x)^{-\alpha},
\end{eqnarray}

where \(H\) is a scaling factor associated with the value of a full horn,
\(\alpha \geq 0\) is a constant that determines the precise relationship between
the quantity and value of the horn.  Fig.~\ref{fig:GainCurve}, verifies that the
gain curve corresponds to a demand curve: we see that as \(r\) increases, so that
the supply of horn decreases, the value is higher and vice versa. We have chosen
a simple function to model the demand (and thus gain) of rhino horn value,
relative to the proportion of rhinos devalued. However, demand for illegal
wildlife generally involves more factors than simply supply. Additional factors
include, but are not limited to, social stigma, tourism revenues, government
corruption, and rich countries being willing to pay to ensure species
existence~\cite{bulte1999economics, van2008protecting}.

\begin{figure}[!htbp]
\centering
\includegraphics[width=0.4\textwidth]{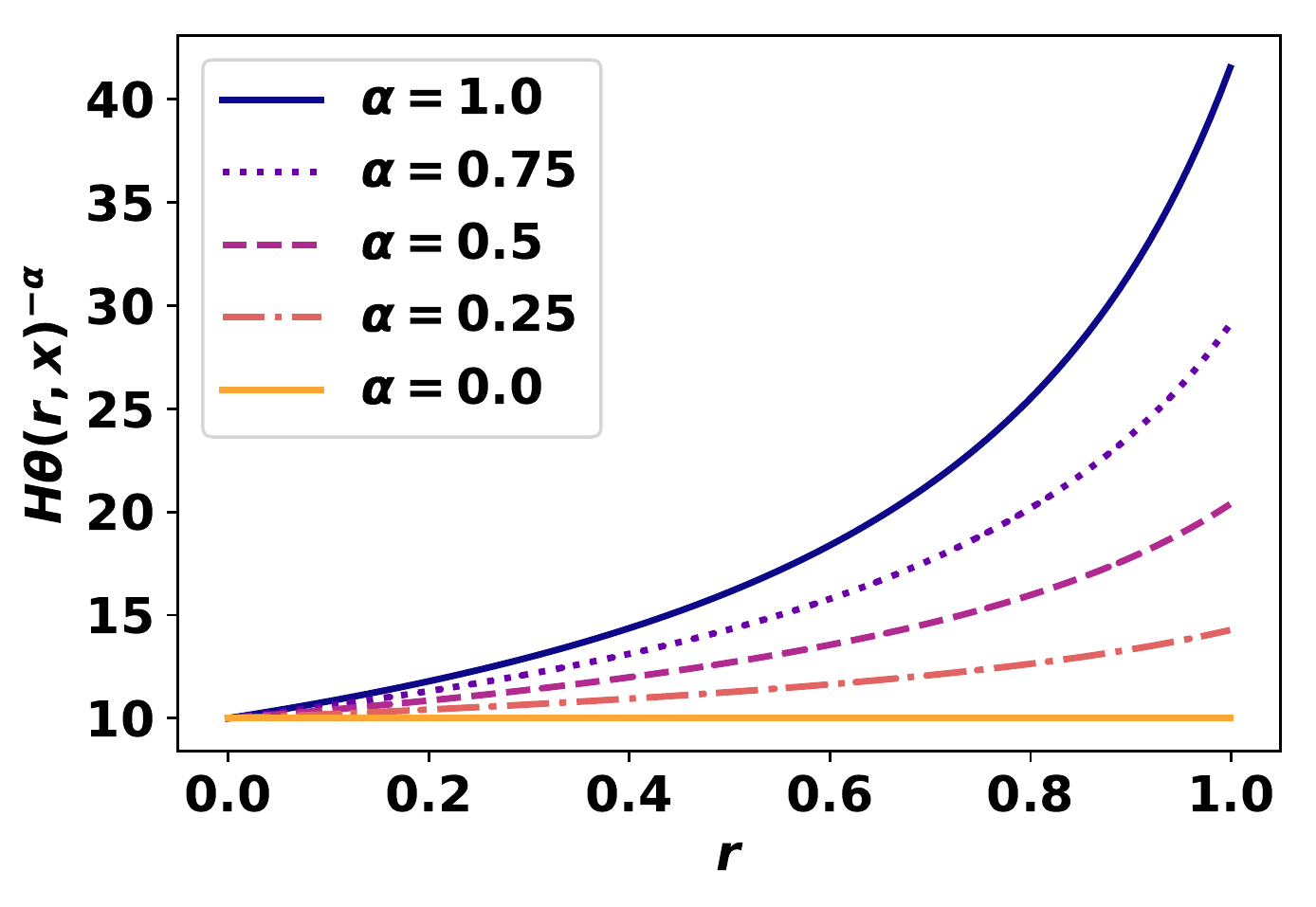}
\caption{\label{fig:GainCurve} \(H \theta(r, x) ^{- \alpha}\) for values
\(H = 10, \theta_r = 0.3\) and \(x = 0.2.\)}
\end{figure}

An individual interacts with the population which is uniquely determined by
\(x\), the proportion of selective poachers.
Therefore, the gain for a poacher in the population \(x\) is either

\begin{eqnarray}
    \label{eqn:gain}
    \left\{
    \begin{array}{cl}
    \theta(r, 1) H \theta(r, x)^{-\alpha} & \mbox{ selective poacher}
    \\
    \theta(r, 0) H \theta(r, x)^{-\alpha} & \mbox{ indiscriminate poacher}
    \end{array} \right.
\end{eqnarray}
depending on the chosen strategy of the individual.

Secondly we consider the costs incurred by the poacher. It is assumed that a given
poacher will spend sufficient time in the park to
obtain the equivalent of at least a single rhinoceros's horn. For selective poachers this
implies searching the park for a fully valued horn and for indiscriminate
poachers this implies either finding a fully valued horn or finding \(N_r\)
total rhinoceroses where \(N_r = \lceil \frac{1}{\theta_r} \rceil\).

Figure~\ref{fig:random_walk} shows a random walk that any given poacher will
follow in the park. Both types of poacher will exit the park as soon as they
encounter a fully valued rhino, which at every encounter is assumed to happen
with probability \(1 - r\). However, the indiscriminate poachers may also exit
the park if they encounter \(N_r\) devalued rhinos in a row.  Each step on the
random walk is assumed to last 1 time unit: during which a rhino is found.  To
capture the fact that indiscriminate poachers will spend a different amount of
time to selective poachers with each rhino the parameter \(\tau\) is introduced
which corresponds to the amount of time it takes to find and kill a rhino (thus
\(\tau\geq 1\)).

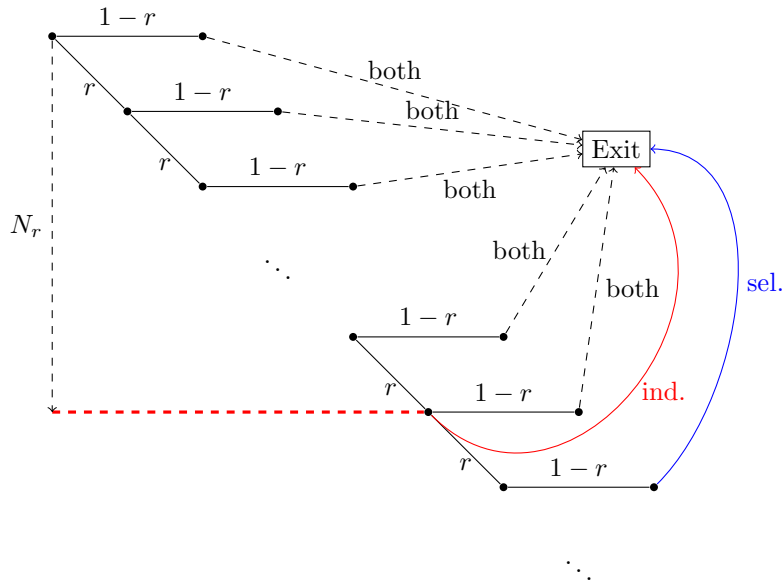
\begin{figure}[!htbp]
    \begin{center}
        \begin{tikzpicture}
            \node [circle,fill=black,inner sep=0pt,minimum size=3pt]
                (1) at (0, 0) {};
            \node [circle,fill=black,inner sep=0pt,minimum size=3pt]
                (1full) at ($(1) + (2, 0)$) {};
            \node [circle,fill=black,inner sep=0pt,minimum size=3pt]
                (2) at (1, -1) {};
            \node [circle,fill=black,inner sep=0pt,minimum size=3pt]
                (2full) at ($(2) + (2, 0)$) {};
            \node [circle,fill=black,inner sep=0pt,minimum size=3pt]
                (3) at (2, -2) {};
            \node [circle,fill=black,inner sep=0pt,minimum size=3pt]
                (3full) at ($(3) + (2, 0)$) {};
            \node (etc) at (3, -3) {\(\ddots\)};
            \node [circle,fill=black,inner sep=0pt,minimum size=3pt]
                (N-1) at (4, -4) {};
            \node [circle,fill=black,inner sep=0pt,minimum size=3pt]
                (N-1full) at ($(N-1) + (2, 0)$) {};
            \node [circle,fill=black,inner sep=0pt,minimum size=3pt]
                (N) at (5, -5) {};
            \node [circle,fill=black,inner sep=0pt,minimum size=3pt]
                (Nfull) at ($(N) + (2, 0)$) {};
            \node [circle,fill=black,inner sep=0pt,minimum size=3pt]
                (N+1) at (6, -6) {};
            \node [circle,fill=black,inner sep=0pt,minimum size=3pt]
                (N+1full) at ($(N+1) + (2, 0)$) {};
            \node (etc2) at (7, -7) {\(\ddots\)};

            \draw (1) -- node [below] {\(r\)} (2);
            \draw (1) -- node [above] {\(1 - r\)} (1full);
            \draw (2) -- node [below] {\(r\)} (3);
            \draw (2) -- node [above] {\(1 - r\)} (2full);
            \draw (3) -- node [above] {\(1 - r\)} (3full);
            \draw (N-1) -- node [below] {\(r\)} (N);
            \draw (N-1) --  node [above] {\(1 - r\)} (N-1full);
            \draw (N) -- node [below] {\(r\)} (N+1);
            \draw (N) --  node [above] {\(1 - r\)} (Nfull);
            \draw (N+1) --  node [above] {\(1 - r\)} (N+1full);

            \node [draw] (exit) at (7.5, -1.5) {Exit};
            \draw [dashed, ->] (1full) -- node [above] {both} (exit);
            \draw [dashed, ->] (2full) -- node [above] {both} (exit);
            \draw [dashed, ->] (3full) -- node [below] {both} (exit);
            \draw [dashed, ->] (N-1full) -- node [left] {both} (exit);
            \draw [dashed, ->] (Nfull) -- node [right] {both} (exit);
            \draw [blue, ->] (N+1full) edge [out=45, in=0] node [right] {sel.} (exit);

            \draw [red, dashed, very thick] ($(N) - (5, 0)$) -- (N);
            \draw [dashed, ->] (1) -- node [left] {\(N_r\)} (0, -5);
            \draw [red, ->] (N) edge[out=-45, in=-45, looseness=1.5] node [right] {ind.}
                (exit);
        \end{tikzpicture}
    \end{center}
    \caption{Illustrative random walk showing the points at which an
    indiscriminate or a selective poacher will leave the park.}

    \label{fig:random_walk}
\end{figure}

Using this, the expected time spent in the park \(T_1, T_2\) by poachers of both
types can be obtained:

For selective poachers:

\begin{equation}
    \begin{split}
    T_1 &= (1-r) \tau + r(1-r)(1+ \tau) + r^2(1-r)(2+ \tau) + \dots\\
        &= (1-r) \sum_{i=0}^{\infty}r^i(i + \tau)\\
        &= (1-r) \left ( \frac{1} {r} \sum_{i=0} ^ {\infty} i r ^ {(i+1)} + \tau \sum_{i=0} ^ {\infty} r ^ i \right)\\
        &= (1-r)\left(\frac{r}{(1-r)^2} + \frac{\tau}{1-r}\right) \quad \quad \quad \quad \text{ \small{using standard
formula for geometric series}}\\
    &= \frac{r+\tau(1-r)}{1-r}
    \end{split}
\end{equation}

For indiscriminate poachers:

\begin{equation}
    \begin{split}
    T_2 &= (1-r)\tau + r(1-r)2\tau + r^2(1-r)3\tau + \dots + r^{N_r -
    2}(1-r)(N_r - 1)\tau + r^{N_r - 1}N_r\tau\\
        &= (1-r)\tau\sum_{i=1}^{N_r - 1}ir^{i-1} + r^{N_r - 1}N_r\tau \\
        &= (1-r)\tau\left(\frac{1}{r \left(r - 1\right)^{2}} \left(N_{r} r
r^{N_{r}} - N_{r} r^{N_{r}} - r r^{N_{r}} + r\right)\right) + r^{N_r - 1}N_r\tau \\
        &= \frac{\tau(1 - r^{N_r})}{(1 - r)}
    \end{split}
\end{equation}

Figure~\ref{fig:indiscriminate_vs_selective_time} shows \(T_1\) and \(T_2\) for
varying values of \(r\) and \(\tau\) highlighting that \(\tau\) has a greater
effect on \(T_2\) than \(T_1\). Also, as \(r\) increases the overall time spent
in the park by both poachers increases and the value of \(\tau\) at which
\(T_1\) and \(T_2\) are equal increases.

\begin{figure}[!htbp]
    \begin{center}
        \includegraphics[width=\linewidth]{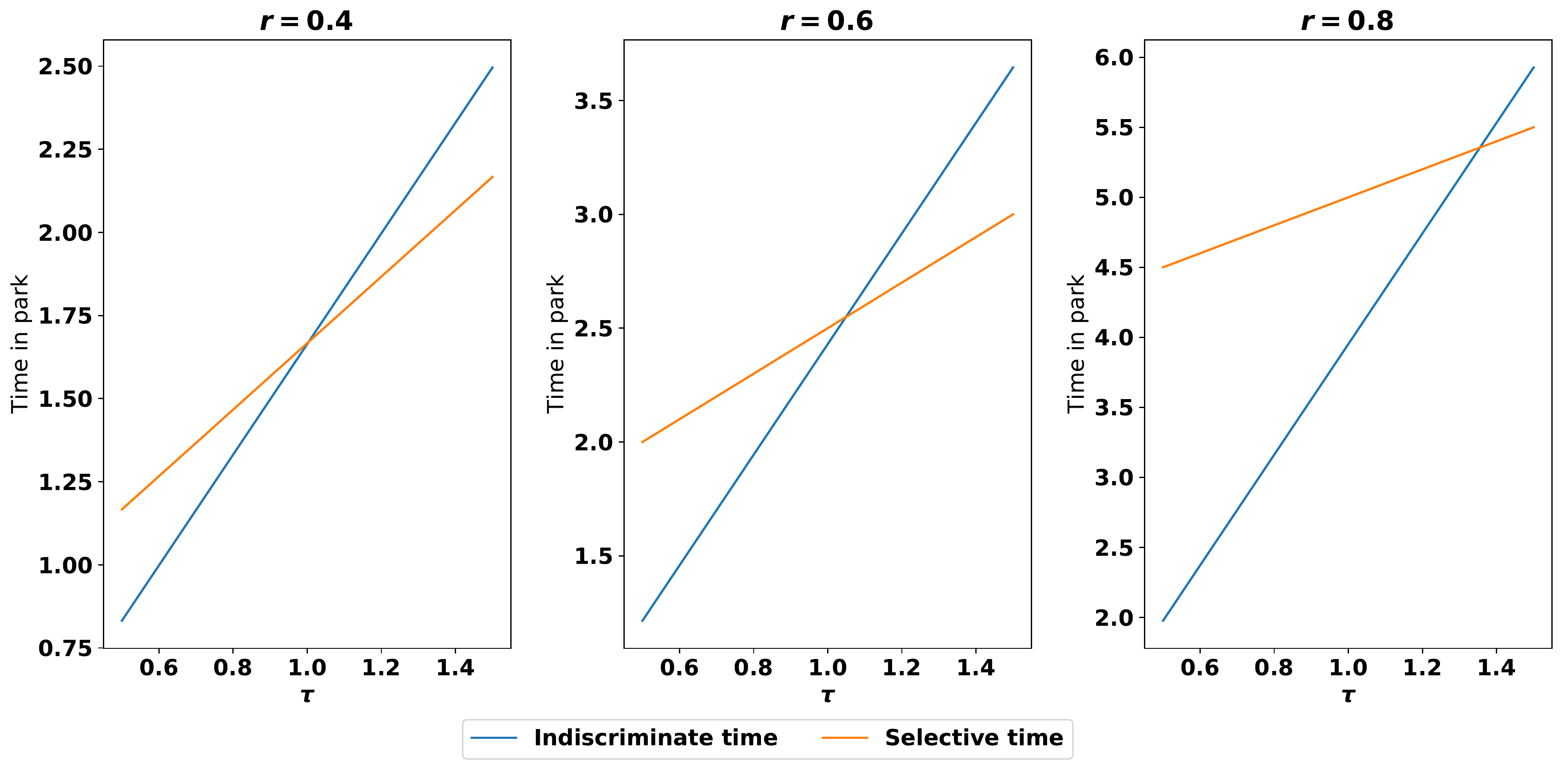}
        \caption{Expected time spent in the park for selective \(T_1\) and
        indiscriminate poachers \(T_2\) for \(\theta_r=.15\).}\label{fig:indiscriminate_vs_selective_time}
    \end{center}
\end{figure}

Additionally, the poachers are also exposed to a risk. The risk to the poacher is
directly related to the proportion of rhinos not devalued, \(1 - r\), since
the rhino manager can spend more on security if the cost of devaluing is low.
In real life this is not always the case. The cost of security can be extremely
high thus it cannot be guaranteed that much security will be added from the
saved money. However, our model assumes that there is a proportional and negative
relationship between the measures.

\begin{eqnarray}
    \label{eqn:risk}
    (1 - r)^{\beta},
\end{eqnarray}

where \(\beta \geq 0\) is a constant that determines the precise relationship
between the proportion of rhinos not devalued and the security on the grounds.
Therefore, at any given time the expected cost (due to the trade off between
security and devaluing) for a poacher is,

\begin{equation}
    \begin{split}
    \label{eqn:cost}
        FT_i(1 - r)^{\beta} \text{ for } i \in \{1, 2\}
    \end{split}
\end{equation}
where \(F\) is a constants that determines the precise relationship. Fig.
\ref{fig:CostCurves} verifies the decreasing relationship between \(r\) and the
cost. Notice that the cost to indiscriminate poachers remains fairly consistent,
irrespective of the proportion of devalued rhinos, until this proportion gets high.
Whereas the cost to selective poachers is more sensitive to the proportion of
rhinos devalued, especially when the time to kill a rhino is large.

\begin{figure}[!htbp]
    \begin{center}
        \includegraphics[width=\linewidth]{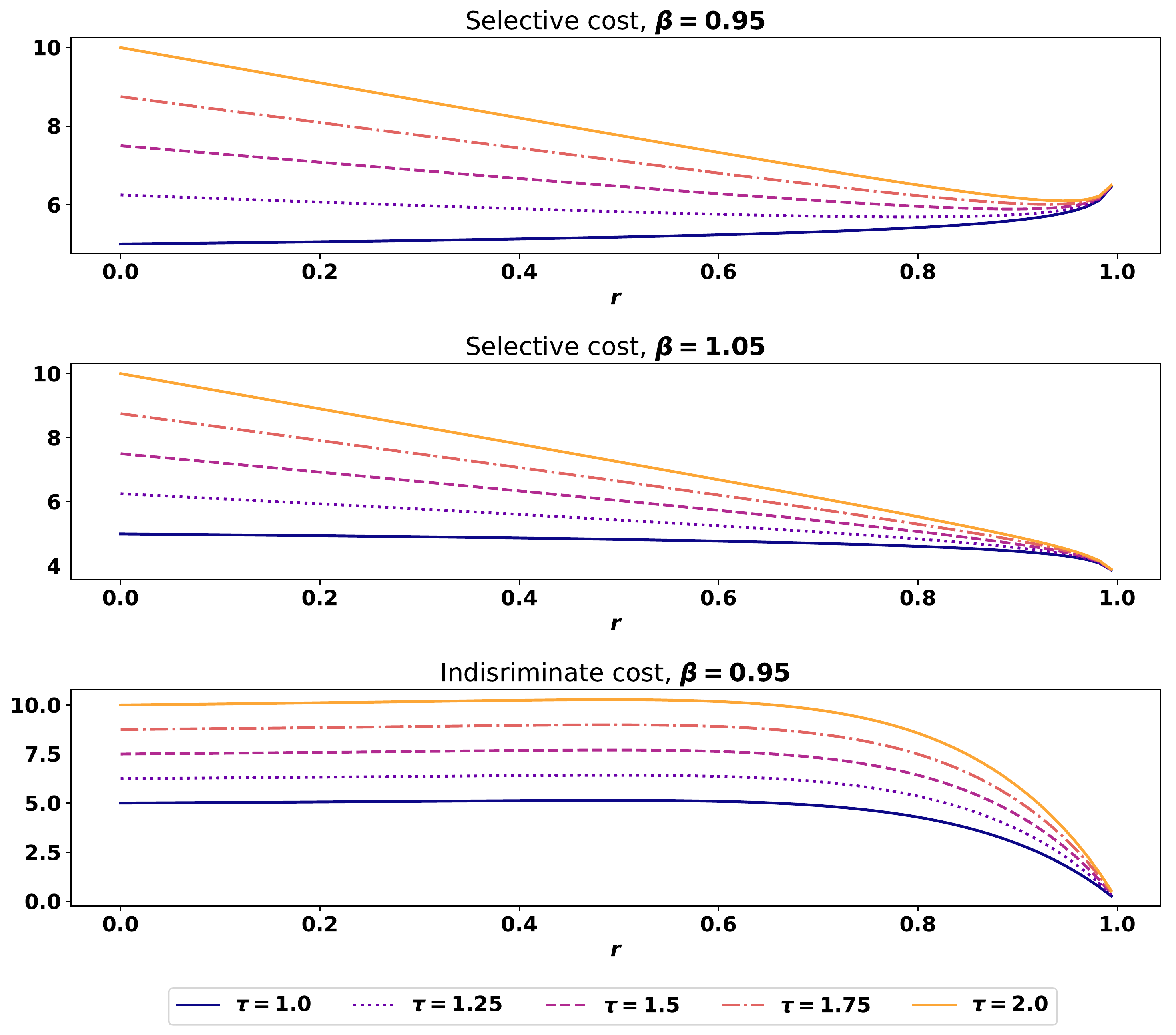}
        \caption{Costs associated to both poachers for \(F=5\) and varying
        values of \(r\) and \(\tau\).}\label{fig:CostCurves}
    \end{center}
\end{figure}

One final consideration given to the utility model is the incorporation of a
disincentive to indiscriminate poachers. Numerous interpretations can be
incorporated with this:

\begin{itemize}
    \item more severe punishment for indiscriminate killing of rhinos;
    \item educational interventions that highlight the negative aspects of
        indiscriminate killing;
    \item the possibility of a better alternative being offered to selective
        poachers.
\end{itemize}

This will be captured by a constant \(\Gamma\).

\noindent Combining~(\ref{eqn:gain}) and (\ref{eqn:cost}) gives the utility
functions for selective poachers, \(u_1(x)\), and indiscriminate poachers,
\(u_2(x)\),

\begin{equation}
    \begin{split}
\label{eqn:USchi}
u_1(x) = \theta(r,1) H \theta(r,x)^{-\alpha}
           - (r+ \tau (1-r))F (1-r)^{\beta - 1} ,
    \end{split}
\end{equation}

\begin{equation}
    \begin{split}
\label{eqn:UnotSchi}
u_2(x) = \theta(r,0) H \theta(r,x)^{-\alpha} - \tau (1 - r^{N_r})F(1-r)^{\beta-1}
- \Gamma
\end{split}
\end{equation}

Given a specific individual, let \(s\) denote the probability of them behaving
selectively.
Thus the general utility function for an individual poacher in the population with
a proportion of \(0 \leq x \leq 1\) selective poachers is

\begin{equation}
\label{eqn:utility}
u(s, x) = s u_1(x) +(1 - s) u_2(x).
\end{equation}
Substituting~(\ref{eqn:USchi}) and~(\ref{eqn:UnotSchi}) into~(\ref{eqn:utility})
and using (\ref{eqn:theta}) gives,

\begin{equation}
    \begin{split}
\label{eqn:tutility2}
u(s, x) = H (\theta_r r(1-s) - r + 1)\theta(r,x)^{-\alpha} - F\left(sr+s\tau(1-r)+
(1-s)\tau(1-r^{N_r})\right)(1-r)^{\beta-1} -(1-s)\Gamma
    \end{split}
\end{equation}

Figure~\ref{fig:evolution_of_system} shows the evolution of the system over time
for a variety of initial populations and parameters.
This is done using numerical integration implemented in~\cite{scipy}.
All the source code used for this work has been written in a sustainable manner: it is
open source (\url{https://github.com/Nikoleta-v3/Evolutionary-game-theoretic-Model-of-Rhino-poaching/})
and tested which ensures the validity of the results. The source code
has also been properly archived and can be found at \cite{Evorepo2018}.

A summary of all the parameters and their meanings is given by Table~\ref{tbl:parameters}.

\begin{table}[!hbtp]
    \begin{center}
        \begin{tabular}{cl}
            \toprule
            Parameter      & Interpretation                \\
            \midrule
            $\theta_r$     & the proportion of value gained from a devalued horn \\
            $r$            & the proportion of rhinos that have been devalued\\
            $H$            & a scaling factor associated with the value of a full horn\\
            $\alpha$       & the relationship between the quantity and value of a horn \\
            $F$            & the cost of retrieving a horn\\
            $\beta$        & the relationship between the proportion of devalued rhinos and security\\
            $\tau$         & the time it takes to find and kill a rhino\\
            $\Gamma$       & a disincentive only applied to indiscriminate poachers \\
            \bottomrule
        \end{tabular}
    \end{center}
    \caption{A summary of the parameters used.}\label{tbl:parameters}
\end{table}

\begin{figure}[!htbp]
    \includegraphics[width=\textwidth]{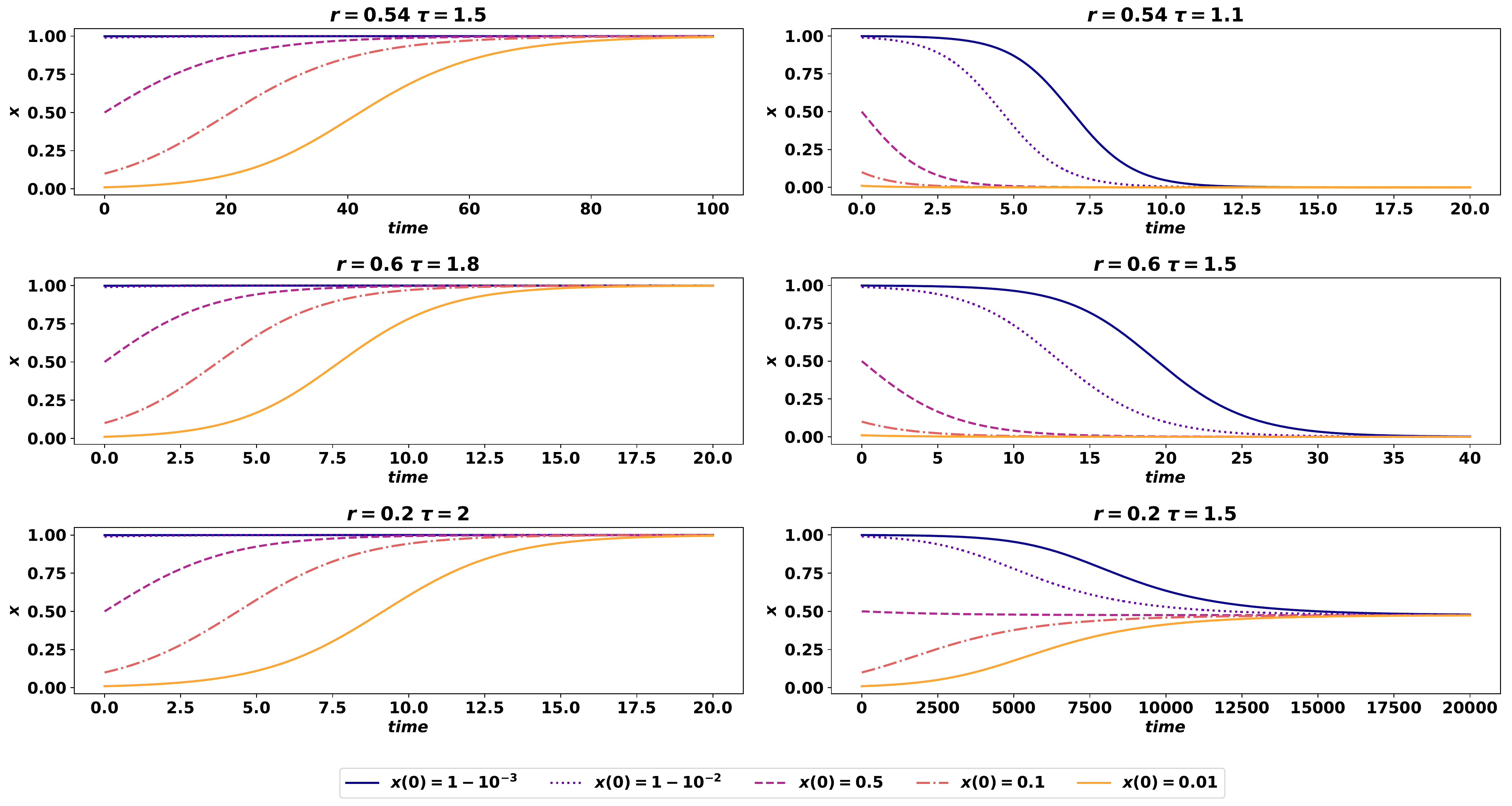}
    \caption{\label{fig:evolution_of_system} The change of the population over
    time with different starting populations. For \(F=5, H=50,
    \alpha=2, \beta=.99, \tau=1.5, \theta_r=0.01, \Gamma=0\).}
\end{figure}

In Figure~\ref{fig:evolution_of_system} the left column shows parameters sets
for which a selective population is stable (\(x=1\). The figures on the right
correspond to a decrease in \(\tau\) which decreases the risk associated with
acting indiscriminately: in these cases a population of selective poachers is
unstable. In one case (\(H=192\)) a mixed population is stable: some poachers
will continue to act selectively.

In Section~\ref{section:evolutionary_stability}, these observations will be
confirmed theoretically.

%=======================================================
\section{Evolutionary Stability}\label{section:evolutionary_stability}

By definition, for a strategy to be an ESS it must first be a best response to an
environment where the entire population is playing the same strategy.
In our model there are three possible stable distributions based on the
equilibria of equation (\ref{eqn:u_differential_simplified}):

\begin{itemize}
    \item all poachers are selective;
    \item all poachers are indiscriminate;
    \item mixed population of selective and indiscriminate poachers.
\end{itemize}

An ESS corresponds to asymptotic behaviour near the equilibria of
(\ref{eqn:u_differential_simplified}), this correspond to the concept of
Lyapunov stability~\cite{lyapunov1992general}.

For simplicity, denote the right hand side of
(\ref{eqn:u_differential_simplified}) as \(f\).
In this setting, when \(x\) is near to some equilibria \(x^*\) so that
\(f(x^*)=0\) then the evolutionary game can be linearized (using standard Taylor
Series expansion) as:

\begin{equation}
    \begin{split}
    \frac{d(x^* + \epsilon)}{dt} = J(x^*)\epsilon
    \end{split}
\end{equation}

where:

\begin{equation}
    \begin{split}
    J(a) = \left.\frac{df}{dx}\right|_{x=a}
    \end{split}
\end{equation}

This gives a standard approach for evaluating equilibria of the underlying game.
For a given equilibria \(x^*\), \(J(x^*)<0\) if and only if \(x^*\) is an ESS\@.

Using equations (\ref{eqn:USchi}) and (\ref{eqn:UnotSchi}):

\begin{equation}
    \begin{split}
    J(a) = \frac{1}{\left(r - 1\right) \left(- a r \theta_{r} + r \theta_{r} - r
+ 1\right)^{\alpha + 1}} \left(J_1 - J_2 \right) + \Gamma(1 - 2a)
    \end{split}
\end{equation}

where:

\begin{equation*}
    \begin{split}
    J_1 & = F \left(- r + 1\right)^{\beta} \left(- a r \theta_{r} + r \theta_{r}
    - r + 1\right)^{\alpha + 1} \left(2 a r t - 2 a r - 2 a
r^{\lceil{\frac{1}{\theta_{r}}}\rceil} t - r \tau + r +
r^{\lceil{\frac{1}{\theta_{r}}}\rceil} \tau\right)\\
    J_2 & = H a \alpha r^{2} \theta_{r}^{2} \left(- a + 1\right) \left(r - 1\right) + H r \theta_{r} \left(2 a - 1\right) \left(r - 1\right) \left(a r \theta_{r} - r \theta_{r} + r - 1\right)
    \end{split}
\end{equation*}

\begin{theorem}\label{theorem:selective}
Using the utility model described in Section~\ref{section:the_model},
a population of selective poachers is stable if and only if:

\begin{equation}
    \begin{split}\label{eq:selective_stability}
    \tau > \frac{1}{1 - r^{\lceil{\frac{1}{\theta_{r}}}\rceil - 1}}
    \frac{F + H \theta_{r} (1 - r) ^{1- \alpha -\beta}-\frac{\Gamma}{r(1-r)^{1-\beta}}}{F}
\end{split}
\end{equation}
\end{theorem}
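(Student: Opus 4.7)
The plan is to apply the Lyapunov linearisation criterion $J(x^*)<0$ at the pure equilibrium $x^*=1$. Rather than substituting $a=1$ directly into the rather unwieldy closed-form expression for $J(a)$ given in the excerpt, I would first exploit the product structure of $f(x)=x(1-x)(u_1(x)-u_2(x))$. Differentiating and setting $x=1$, the middle and outer factors collapse and one obtains $J(1)=f'(1)=u_2(1)-u_1(1)$. So stability of the all-selective population is equivalent to the single scalar inequality $u_1(1)>u_2(1)$, and the whole theorem reduces to rearranging this inequality to isolate $\tau$.

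Next I would evaluate the two utilities at $x=1$. From equation~(\ref{eqn:theta}) we read off $\theta(r,1)=1-r$ and $\theta(r,0)=1-r+r\theta_r$. Note in particular that $\theta(r,x)$ in the demand term of both $u_1$ and $u_2$ is evaluated at $x=1$, so the factor $\theta(r,1)^{-\alpha}=(1-r)^{-\alpha}$ appears identically in both. Substituting into (\ref{eqn:USchi}) and (\ref{eqn:UnotSchi}) and subtracting,
\begin{equation*}
u_1(1)-u_2(1) = -H r\theta_r(1-r)^{-\alpha} \;-\; F(1-r)^{\beta-1}\bigl[\,r+\tau(1-r)-\tau(1-r^{N_r})\bigr] \;+\; \Gamma.
\end{equation*}
The key simplification is in the square bracket: expanding gives $r-\tau r+\tau r^{N_r}=r\bigl[1-\tau(1-r^{N_r-1})\bigr]$, and this is the step that produces the $(1-r^{\lceil 1/\theta_r\rceil -1})$ factor eventually appearing in the theorem.

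From here it is pure bookkeeping. Imposing $u_1(1)-u_2(1)>0$ and moving the $\tau$-term to one side yields
\begin{equation*}
F(1-r)^{\beta-1}\,r\,\tau\bigl(1-r^{N_r-1}\bigr) \;>\; H r\theta_r(1-r)^{-\alpha} + F(1-r)^{\beta-1} r - \Gamma.
\end{equation*}
Dividing through by the positive quantity $F(1-r)^{\beta-1} r(1-r^{N_r-1})$ (positive because $0<r<1$ and $N_r\geq 2$) and consolidating the power of $(1-r)$ in the horn-value term using $(1-r)^{-\alpha}/(1-r)^{\beta-1}=(1-r)^{1-\alpha-\beta}$, delivers exactly the bound~(\ref{eq:selective_stability}).

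The only genuine pitfall is clerical: keeping the signs of the factors $(r-1)$ versus $(1-r)$ straight, and verifying that the denominator by which we divide is positive so the direction of the inequality is preserved. Once those signs are tracked and $N_r=\lceil 1/\theta_r\rceil$ is reinserted, the rearrangement is mechanical and the ``if and only if'' follows from each step being reversible.
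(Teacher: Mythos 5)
Your proof is correct and follows essentially the same route as the paper: both reduce stability of the all-selective population to the linearisation criterion \(J(1)<0\) and then rearrange for \(\tau\); your only (welcome) shortcut is extracting \(J(1)=u_2(1)-u_1(1)\) from the product structure of \(f(x)=x(1-x)\bigl(u_1(x)-u_2(x)\bigr)\) instead of substituting \(a=1\) into the paper's general expression for \(J(a)\), and your resulting expression \(Hr\theta_r(1-r)^{-\alpha}+F(1-r)^{\beta-1}\bigl(r-\tau(r-r^{N_r})\bigr)-\Gamma\) coincides exactly with the paper's \(J(1)\). One minor remark: your algebra (like the paper's own penultimate inequality) produces the disincentive term as \(\Gamma(1-r)^{1-\beta}/r\), i.e.\ \(\Gamma/\bigl(r(1-r)^{\beta-1}\bigr)\), whereas the printed bound writes \(\Gamma/\bigl(r(1-r)^{1-\beta}\bigr)\) --- an inconsistency internal to the paper's final rearrangement rather than a gap in your argument.
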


\begin{proof}
Direct substitution gives:

\begin{equation*}
    \begin{split}
    J(1) &= \frac{1}{\left(- r + 1\right)^{\alpha + 1} \left(r - 1\right)}
                \left(
                    F \left(- r + 1\right)^{\beta}
                    \left(- r + 1\right)^{\alpha + 1}
                    \left(
                        r \tau -
                        r -
                        r^{\lceil{\frac{1}{\theta_{r}}}\rceil} \tau
                    \right)
                    - H r
                    \theta_{r}
                    \left(r - 1\right)^{2}
                \right) - \Gamma\\
		&= \left(F
                 \left(1 - r\right)^{\beta - 1}
                 \left(
                     r -
                     \tau(r - r^{\lceil{\frac{1}{\theta_{r}}}\rceil} )
                 \right) +
                 H r \theta_{r} \left(1 - r\right)^{-\alpha}
           \right)
           - \Gamma
        \end{split}
    \end{equation*}

The required condition is \(J(1)<0\):

\begin{equation*}
    \begin{split}
    F
    \left(1 - r\right)^{\beta - 1}r +
    H r \theta_{r}
    \left(1 - r\right)^{-\alpha}
    - \Gamma
            & < F \left(1 - r\right)^{\beta - 1}
    \tau(r - r^{\lceil{\frac{1}{\theta_{r}}}\rceil} )\\
    \frac{r}{r - r^{\lceil{\frac{1}{\theta_{r}}}\rceil}}
    \frac{F + H \theta_{r} \left(1 - r\right)^{1-\beta-\alpha} -
    \frac{\Gamma}{r(1-r)^{1-\beta}}}{F} & < \tau
\end{split}
\end{equation*}

which gives the required result.
\end{proof}

Note that the limit of the right hand side of equation (\ref{eq:selective_stability})
tends to infinity as \(r\to1^-\). This means that devaluing all rhinos is not
a valid approach.

Furthermore, we see that the equilibria with poachers acting selectively,
predicted in~\cite{Lee} can in fact be obtained in specific settings.

Note that similar theoretic results have been obtained about the evolutionary
stability of indiscriminate poachers but these have been omitted for the sake of
clarity.

In this section we have analytically studied the stability of all the possible
equilibria. We have proven that all potential equilibria are possible.  All of
these theoretic results have been verified empirically, and the data for this
has been archived at~\cite{Glynatsi2017}.
Figure~\ref{fig:convergence-over-r} shows a number of scenarios where \(F=5\),
\(\beta=0.99\), \(\theta_r = 0.05\) and, unless varied as stated on the x-axis:

\begin{itemize}
    \item Scenario 1: \(H=50\)  \(r=0.45\) \(\alpha=2\), \(\tau=2\), \(\Gamma=0\)
\item Scenario 2: \(H=50\)  \(r=0.4\) \(\alpha=2.5\), \(\tau=1.8\), \(\Gamma=0\)
\item Scenario 3: \(H=25\)  \(r=0.45\) \(\alpha=2\), \(\tau=2\), \(\Gamma=0\)
\item Scenario 4: \(H=25\)  \(r=0.4\) \(\alpha=2.5\), \(\tau=1.8\), \(\Gamma=0\)
\item Scenario 5: \(H=25\)  \(r=0.99\) \(\alpha=2\), \(\tau=2\), \(\Gamma=4\)
\item Scenario 6: \(H=25\)  \(r=0.99\) \(\alpha=2.5\), \(\tau=1.8\), \(\Gamma=4\)

\end{itemize}

\begin{figure}[!htbp]
    \includegraphics[width=\textwidth]{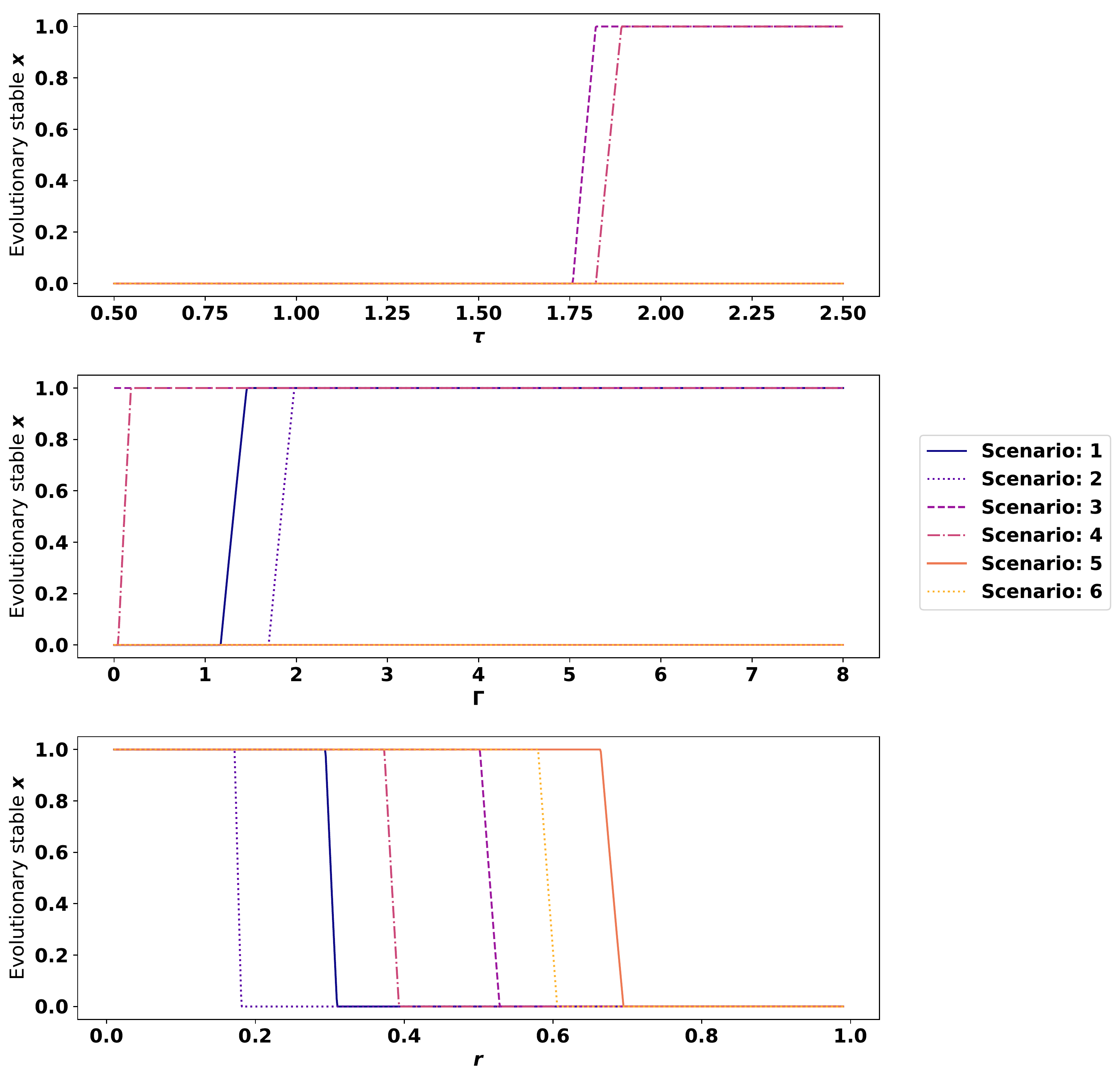}
    \caption{Evolutionary stable populations for varying values of \(\tau, r,
    \Gamma\) for 6 difference scenarios.}
    \label{fig:convergence-over-r}
\end{figure}

The first plot in Figure~\ref{fig:convergence-over-r} shows that when the value
of a full horn is low, and there is no disincentive factor (Scenarios 3 and 4),
the poacher strategy can be influenced by the time taken to kill a rhino such
that a long time can push the poacher to behave selectively (as there is
increased risk associated with acting indiscriminately). The second plot, shows
that the disincentive factor has most influence when the value of a full horn is
high (Scenarios 1 and 2). Otherwise, poachers will generally be indiscriminate
if \(r\) is large (Scenarios 5 and 6) or selective otherwise (Scenarios 1, 2, 3
and 4).  Most importantly, the third plot
confirms Theorem~\ref{theorem:selective}. A high value of \(r\)
forces the population to become indiscriminate even with a high disincentive.
Moreover, for all scenarios a value of \(r\) does exist for which a selective
population will subsist.

This confirms that devaluing alone is not a solution and in fact can
potentially have averse consequences: combinations of devaluing and
education (creating a disincentive) is needed.

%=============================================================
\section{Discussion and Conclusions}
\label{section:discussion}

In this work the dynamics of a selective population were explored. It was shown
that given sufficient risk associated with killing a rhino, it would be possible
for a selective population of poachers to subsist.

We have developed a game theoretic model which examines the specific question
for rhino managers: how to deter poachers by devaluing horns?  One of the main
conclusions of the work presented here is that if there is sufficient risk
associated with indiscriminate behaviour then a population of selective poachers
can be stable. The model also incorporates wider factors in a general manner such as
a disincentive factor. The disincentive factor may be an increase in the
monetary fine for poachers. In fact~\cite{di2015identification}, who identify
the most important contributors to the number of rhinos illegally killed in
South Africa (between 1900 and 2013), found that increasing the monetary fine
has a more significant effect than increasing the years in prison. However, the
disincentive factor may also include wider influences, such as engaging the
rural communities that neighbour wildlife~\cite{Duan2013}, or decreasing the
cost of living with wildlife, and supporting a livelihood that is not related to
poaching. Zooming out further, it could include global issues such as an
increase in ecotourism, which would provide a sustainable income for the
community.

Another opportunity for wider factors, such as global issues, to be included in
the model is via the supply and demand function. For example,~\cite{di2015identification}
show that one of the three most important contributors to the number of rhinos illegally
killed was the GDP in Far East Asia, where the demand for rhino horn is at its greatest.
This finding supports~\cite{lawson2014global} call for improved law enforcement and
demand reduction in the Far East.

Note that the proportion of devalued rhinos \(r\) is continuous over \([0, 1]\)
in the model. However, standard practice of a given park manager in almost all
cases is to either devalue all the animals in a defined enclosed area, or none
at all. This is thought to be because partial devaluing tends to disturb
rhino social structures. Our results indicate that devaluing all rhinos will
only decrease rhino poaching if potential poachers have a viable alternative
(even in the case of a large disincentive).

The debate about the effectiveness of devaluation for preventing poachers and,
is extensive and ongoing. This model answers
one aspect of the topic, but larger questions remain. There are many drivers to
account for, many of which are included in a systems dynamics model presented
in~\cite{crookes2016trading} which captures the five most important factors:
rhino abundance, rhino demand, a price model, an income model and a supply
model. Using the optimal dehorning model of~\cite{Milner1992}, the
model~\cite{crookes2016trading} finds that poachers behaving
indiscriminately will always prevail, which indicates that the risk associated
with indiscriminate behaviour might not have been captured fully.

Following discussions with environmental specialists it is clear that devaluing
is empirically thought to be one of the best responses to poaching. This
indicates that whilst of theoretic and potential macroeconomic interest, the
modelling approach investigated in this work has potential for further work. For
example, a detailed study of two neighbouring parks with differing policies
could be studied using a game theoretic model, this would require an
understanding of the travel times which can be very large and have a non
negligible effect. Another interesting study would
be to introduce a third strategy available to poachers: this would represent the
possibility of not poaching (perhaps finding another source of income) and/or
leaving the current environment to poach elsewhere. Finally, the specific rhino
population could also be modelled using similar techniques and incorporated in
the supply and demand model.

\section*{Authors' contributions}

All authors conceived the ideas and designed the methodology. NG and VK developed the
source code needed for the numerical experiments and generating the data. All authors
contributed critically to the drafts and gave final approval for publication.

\section*{Acknowledgements}

This work was performed using the computational facilities of the Advanced
Research Computing @ Cardiff (ARCCA) Division, Cardiff University.

A variety of software libraries have been used in this work:

\begin{itemize}
    \item The Scipy library for various algorithms~\cite{scipy}.
    \item The Matplotlib library for visualisation~\cite{hunter2007matplotlib}.
    \item The SymPy library for symbolic mathematics~\cite{sympy}.
    \item The Numpy library for data manipulation~\cite{walt2011numpy}.
\end{itemize}

We would like to express our appreciation to David Roberts and Michael 't
Sas-Rolfes for their valuable and constructive suggestions for this research.
We would also like to thank
Dr. Jonathan Gillard for his careful reading and double checking of the algebra in the paper,
and Anna Huzar for Figure~\ref{fig:RhinoPic}.

Finally, we would like to thank the referees whose comments on the first version 
of this manuscript greately improved the undertaken research.

\section*{Data Accessibility}

The data generated for this work have been archived and are available
online~\cite{Glynatsi2017}.

\bibliographystyle{plain}
\bibliography{bibliography.bib}

\end{document}